\definecolor{myurlcolor}{rgb}{0,0,0.4}
\definecolor{mycitecolor}{rgb}{0,0.5,0}
\definecolor{myrefcolor}{rgb}{0.5,0,0}
\newcommand*{\addFileDependency}[1]{
  \typeout{(#1)}
  \@addtofilelist{#1}
  \IfFileExists{#1}{}{\typeout{No file #1.}}
}
\newcommand*{\myexternaldocument}[1]{
    \externaldocument{#1}
    \addFileDependency{#1.tex}
    \addFileDependency{#1.aux}
}
\newcommand{\beq}[0]{\begin{equation}}
\newcommand{\eeq}[0]{\end{equation}}
\newcommand{\one}{\leavevmode\hbox{\small1\normalsize\kern-.33em1}}
\def\be{\begin{equation}}
\def\ee{\end{equation}}
\def\ben{\begin{eqnarray}}
\def\een{\end{eqnarray}}
\def\eea{\end{array}}
\def\bea{\begin{array}}
\newcommand{\Tr}[1]{\mathrm{Tr}#1}
\newcommand{\bei}{\begin{itemize}}
\newcommand{\eei}{\end{itemize}}
\newcommand{\ket}[1]{|#1\rangle}
\newcommand{\bra}[1]{\langle#1|}
\newcommand{\proj}[1]{\ket{#1}\!\!\bra{#1}}
\newcommand{\I}{\mathbbm{1}}
\newcommand{\w}{\omega}
\renewcommand{\emph}[1]{\textbf{#1}}
\newtheorem*{rep@theorem}{\rep@title}
\newcommand{\newreptheorem}[2]{%
\newenvironment{rep#1}[1]{%
 \def\rep@title{#2 \ref{##1}}%
 \begin{rep@theorem}}%
 {\end{rep@theorem}}}
\theoremstyle{plain}
\newtheorem{thm}{Theorem}
\newtheorem*{thm*}{Theorem}
\theoremstyle{definition}
\theoremstyle{remark}
\begin{document}

\title{Certification of randomness without seed randomness}
\author{Shubhayan Sarkar}
\email{shubhayan.sarkar@ulb.be}
\affiliation{Laboratoire d’Information Quantique, Université libre de Bruxelles (ULB), Av. F. D. Roosevelt 50, 1050 Bruxelles, Belgium}

\begin{abstract}	
The security of any cryptographic scheme relies on access to random number generators. Device-independently certified random number generators provide maximum security as one can discard the presence of an intruder by considering only the statistics generated by these devices. Any of the known device-independent schemes to certify randomness require an initial feed of randomness into the devices, which can be called seed randomness. In this work, we propose a one-sided device-independent scheme to certify two bits of randomness without the initial seed randomness. For our purpose, we utilise the framework of quantum networks with no inputs and two independent sources shared among two parties with one of them being trusted. Along with it, we also certify the maximally entangled state and the Bell basis measurement with the untrusted party which is then used to certify the randomness generated from the untrusted device.

\end{abstract}


\maketitle

\section{Introduction}

Device-independent (DI) certification of quantum states and measurement is a type of verification of an unknown quantum device without relying on any assumptions about the device's internal workings. 
Such a certification is achieved by comparing the device's input-output statistics to the expected statistics of an ideal device. This allows one to make inferences about the device's underlying properties to be equivalent to the ideal one's properties up to some degree of freedom. 
The strongest DI certification is termed self-testing where one can completely characterize the states and measurements up to some degrees of freedom.
An important application of DI certification is towards ensuring that one can securely generate genuine randomness from the outcomes of the measurement devices even when an intruder might have access to them. This is extremely important for any cryptographic scheme as the security of these schemes relies on access to random number generators.

Any of the known schemes for DI certification of randomness requires access to seed randomness, that is, the measurement devices whose outcomes will be used to generate random numbers, have inputs that have to be chosen randomly in order for the protocol to be secure [for instance see Refs. \cite{di4, random0, random1, rand1, rand2, rand3, Fehr, Pironio2, APP13, Armin1, chainedBell, sarkar, sarkar5}]. The reason is that most of these schemes rely on violating a Bell inequality \cite{Bell66} which requires that the measurement devices have at least two inputs that are randomly sampled. Even when considering relaxations by assuming some properties about the devices involved, known as semi-DI schemes, 
one still requires access to seed randomness in order to certify genuine random numbers [for instance see Refs. \cite{Armin3, Farkas, PMrand1, PMrand2, PMrand3, PMrand4, PMrand5, PMrand6, PMrand7}].     
One such class of semi-DI schemes are known as one-sided DI (1SDI), where one of the parties is assumed to be trusted, that is, the measurements performed by this party are known. There are a few works \cite{steerand2, steerand3, steerand4, sarkar12} that propose 1SDI schemes for the certification of randomness generated from the measurement outcomes of the untrusted device. In particular, \cite{sarkar12} proposes a 1SDI scheme that can be used to certify the maximum amount of randomness, that is $2\log_2d$ bits, extractable from a $d-$dimensional system with the trusted party having $d+1$ inputs along with three inputs for the untrusted party. 

Here, we consider the problem of certifying randomness without the initial feed of randomness to the measurement devices. For this purpose, we consider the framework of quantum networks introduced in \cite{pironio1, pironio21, Fritz}. Particularly in Refs. \cite{pironio21,Fritz}, it was observed that by considering independent sources that are shared among non-communicating parties one can detect quantum nonlocality using only a single fixed measurement for each party. Further on, DI certification of quantum states and measurements in quantum networks was recently explored in Refs. \cite{Marco, NLWEsupic, JW2, Allst1, supic4, sekatski, sarkar2023}. However, all of these certification schemes require at least two inputs for most of the measurement devices. A partial certification scheme was proposed in \cite{sekatski} that utilizes the genuine network nonlocality without inputs in a triangle network \cite{renou1}. However, using the proposed scheme \cite{sekatski}, one can only conclude that the sources need to prepare entangled states and the measurements performed by the parties are non-classical. Thus, the exact certification of quantum states, measurements and randomness without inputs in a DI way or semi-DI way is still an open problem.

Recently, a form of quantum nonlocality in networks, termed as swap-steering, was introduced in \cite{sarkar15}. The scenario consisted of two spatially separated parties, with one of them being trusted. Each of them receives two subsystems from two independent sources on which they perform a single fixed measurement. 
In order to certify randomness, we first certify the states and measurements in the presence of an adversary who might have access to them. For our purpose, we consider the witness introduced in \cite{sarkar15} and use it for self-testing 
two maximally entangled states and the Bell basis measurement with the untrusted party up to the freedom of local unitaries and the presence of some junk Hilbert space. Then, we show that the measurement outcomes of the untrusted party are genuinely random as the adversary can not guess the outcomes of this measurement given the assumption that the sources can be correlated only in a classical way. Thus, we are able to certify the amount of two bits of randomness without any initial feed of randomness to the devices in a 1SDI way.  

Before proceeding with the result, let us first introduce the relevant concepts required for the manuscript. 

\section{Preliminaries}

Let us briefly describe the swap-steering scenario introduced in \cite{sarkar15} which involves two parties named Alice and Bob, who are located in separate labs. They each receive two subsystems from two distinct sources, denoted as $S_1$ and $S_2$, which are statistically independent of one another. They then conduct a solitary four-outcome measurement on their respective subsystems, with the outcomes labeled as $a,b=0,1,2,3$ for Alice and Bob, respectively [see Fig. \ref{fig1}]. It should be noted that Alice is considered trustworthy in this scenario, which means that the measurement on her subsystems is well-known which in this case corresponds to the Bell basis given by $ M_{A}=\{\proj{\phi_{+}},\proj{\phi_{-}},\proj{\psi_{+}},\proj{\psi_{-}}\}_{A_1A_2} $ where

\begin{eqnarray}\label{Amea1}
    \ket{\phi_{\pm}}_{A_1A_2}&=&\frac{1}{\sqrt{2}}\left(\ket{0}_{A_1}\ket{0}_{A_2}\pm\ket{1}_{A_1}\ket{1}_{A_2}\right)\nonumber\\
    \ket{\psi_{\pm}}_{A_1A_2}&=&\frac{1}{\sqrt{2}}\left(\ket{0}_{A_1}\ket{1}_{A_2}\pm\ket{1}_{A_1}\ket{0}_{A_2}\right).
\end{eqnarray}
Here $A_i/B_i\ (i=1,2)$ denote the two different subsystems of Alice/Bob respectively. Alice and Bob proceed to repeat the experiment multiple times to create the joint probability distribution (also known as correlations) represented by $\vec{p}={p(a,b)}$. Here, $p(a,b)$ refers to the likelihood of Alice and Bob obtaining the outcomes $a,b$, respectively.

In quantum theory, it is advantageous to express the correlations in terms of expectation values rather than probability distributions. When dealing with $d$-outcome measurements, a useful technique is to utilize the two-dimensional Fourier transform of the conditional probabilities $p(a,b)$ as
\begin{equation}\label{ExpValues}
    \langle A^{(k)}_0 B^{(l)}_0 \rangle = \sum^{d-1}_{a,b=0} \w^{ak+bl} p(a,b),
\end{equation}
where $\w$ is the $d$-th root of unity $\omega=\exp(2\pi\mathbbm{i}/d)$ and $k,l=0,\ldots,d-1$ and $ A^{(k)}_0,B^{(l)}_0$ are known as observables. 
Using the inverse Fourier transform of \eqref{ExpValues}, we obtain that
\begin{eqnarray}\label{ExpValues1}
    p(a,b)=\frac{1}{d^2}\sum_{k,l=0}^{d-1}\omega^{-(ak+bl)} \langle A^{(k)}_0 B^{(l)}_0 \rangle.
\end{eqnarray}

The expectation value appearing on the left-hand side of Eq. (\ref{ExpValues}) can be simply represented as $\langle A^{(k)}_0 B^{(l)}_0 \rangle = \Tr(A^{(k)}_0 \otimes B^{(l)}_0\rho_{AB})$ for some state $\rho_{AB}$ with $\{A^{(k)}_0\}$ and $\{B^{(l)}_0\}$ are operators defined as
\begin{equation}\label{obsgen}
    A^{(k)}_0= \sum^{d-1}_{a=0} \w^{ak} P^{(a)}, \qquad B^{(l)}_0 = \sum^{d-1}_{b=0} \w^{bl} Q^{(b)}.
\end{equation}
where $P^{(a)},Q^{(b)}$ represent the measurement elements of Alice, Bob respectively.
As proven in \cite{Jed1}, the observables $ A^{(k)}_0$ have the following properties (same for $B^{(l)}_0$): $A^{(d-k)}_0=(A^{(k)}_0)^{\dagger}$ and $A^{(k)}_0(A^{(k)}_0)^{\dagger}\leq\I$. For the special case of projective measurements, the observables $A^{(k)}_0$ are unitary and $ A^{(k)}_0= (A^{(1)}_0)^k=A^k_0$.

As described above, Alice performs the Bell-basis measurement whose corresponding measurement elements for the rest of the manuscript will be denoted as $\ket{\phi_1}=\ket{\phi^+},\ket{\phi_2}=\ket{\phi^-},\ket{\phi_3}=\ket{\psi^+}, \ket{\phi_4}=\ket{\psi^-}$ and the corresponding observable using \eqref{obsgen} is given as
\begin{eqnarray}\label{A0}
    A_0=\sum_{k=0}\mathbbm{i}^{k}\proj{\phi_k}.
\end{eqnarray}
Let us now define the task of self-testing relevant to this manuscript. 

\begin{figure}[t]
\includegraphics[width=\linewidth]{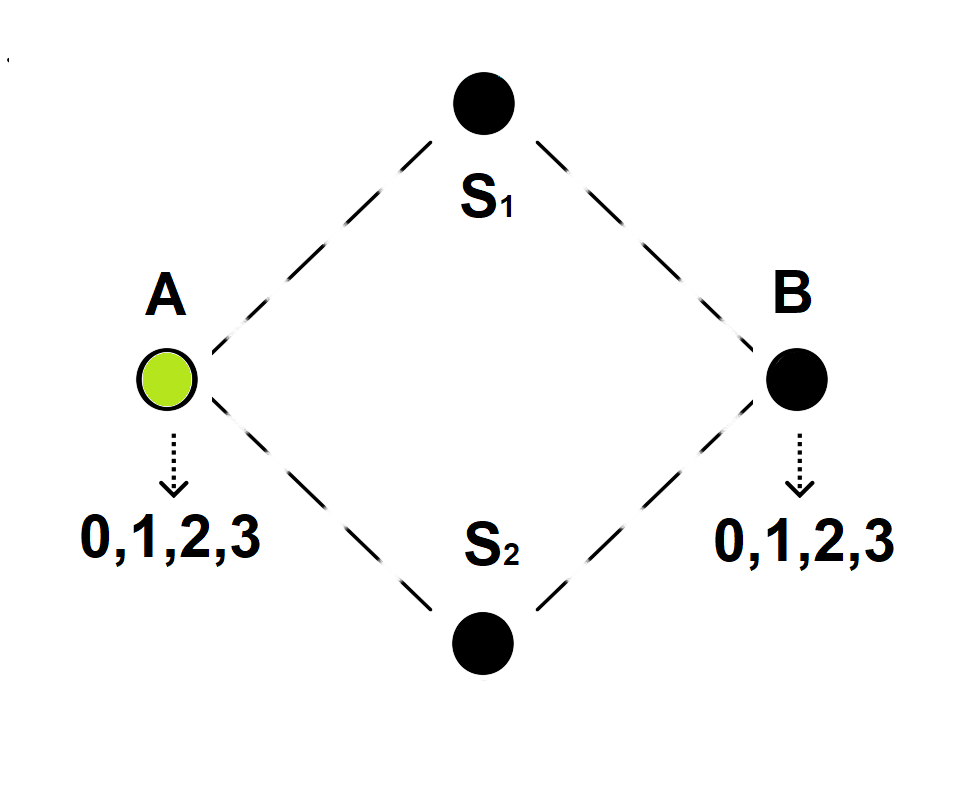}
    \caption{Swap-steering scenario. Alice and Bob are physically distant from each other and receive two subsystems each from the sources $S_1, S_2$. Both of them independently conduct a single four-outcome measurement on their respective received subsystems. It is essential to note that Alice is considered trustworthy and is known to perform the Bell-basis measurement. Communication between them during the experiment is strictly prohibited.}
    \label{fig1}
\end{figure}

{\it{Self-testing in 1SDI scenario without inputs.}} Self-testing in the 1SDI scenario was first defined 
in Ref. \cite{Supic, Alex}. Inspired by \cite{sarkar6, sarkar12, sarkar11}, we present a general definition of self-testing in the 1SDI scenario in quantum networks without inputs with one trusted party. Interestingly, we do not require assuming a pure underlying state or projective measurements.

Let us revisit the previous experiment in which Alice and Bob conduct measurements on the states $\rho_{AB}$ prepared by the sources $S_i\ (i=1,2)$ and observe the correlations ${p(a,b)}$. It is important to note that $A_0$ is fixed, whereas Bob's observables $B_0^{(k)}$ is arbitrary. Now, let's examine a reference experiment that reproduces the same statistics as the actual experiment but involves the states $\tilde{\rho}_{AB}$ and observables represented by $\tilde{B}_0^{(k)}$, which both parties wish to validate. The states $\rho_{AB}$ and the observables $B_0^{(k)}$ are self-tested from $\{p(a,b)\}$ if there exists a unitary $U_B:\mathcal{H}_B\to \mathcal{H}_B$ such that 
\begin{equation}
 (\mathbbm{1}_A\otimes U_B)\rho_{AB}(\mathbbm{1}_A\otimes U_B^{\dagger})=\tilde{\rho}_{AB'}\otimes\rho_{B''},
\end{equation}
\begin{equation}
    U_B\,B_0^{(k)}\,U_B^{\dagger}=\tilde{B}_0^{(k)}\otimes\mathbbm{1}_{B''},
\end{equation}
where $\mathcal{H}_B$ decomposes as $\mathcal{H}_B=\mathcal{H}_{B'}\otimes\mathcal{H}_{B''}$ such that $\mathcal{H}_{B''}$ denotes the junk Hilbert space. The states $\rho_{B''}$ and $\mathbbm{1}_{B''}$ denote the junk state and the identity acting on $\mathcal{H}_{B''}$.

\section{Results}

Let us first revisit the swap-steering inequality introduced in \cite{sarkar15}
\begin{eqnarray}\label{steein}
W=p(0,0)+p(1,1)+p(2,2)+p(3,3)\leq \beta_{LHS}
\end{eqnarray}
Using \eqref{ExpValues1}, the above steering inequality can be simply represented as
\begin{eqnarray}\label{steein1}
  W= \frac{1}{4}\sum_{k=0}^3 \langle A_0^k\otimes B_0^{(4-k)} \rangle\leq \beta_{LHS}
\end{eqnarray}
As shown in \cite{sarkar15}, the quantum bound of the above steering inequality is $1$ which is also the maximum algebraic value of $W$. Consequently, we observe from \eqref{steein1} that the maximum value can be attained iff each term is $1$, that is, for $k=0,1,2,3$
\begin{eqnarray}
    \langle A_0^k\otimes B_0^{(4-k)} \rangle=1.
\end{eqnarray}
Now, using Cauchy-Schwarz inequality we get that
\begin{eqnarray}\label{SOS3}
     A_0^k\otimes B_0^{(4-k)}\rho_{AB}=\rho_{AB}.
\end{eqnarray}
Recalling that $\rho_{AB}$ is separable, we can express it as $\rho_{AB}=\sum_jp_j\ \rho^j_{A_1B_1}\otimes\rho^j_{A_2B_2}$ which using its eigendecomposition can be expressed as $\rho_{AB}=\sum_{s,s'}p_{s,s'}\proj{\psi_{s,A_1B_1}}\otimes\proj{\psi_{s',A_2B_2}}$. Consequently, we get from the above expression Eq. \eqref{SOS2} that
\begin{equation}\label{SOS1}
    \sum_{s,s'}p_{s,s'} A_0^k\otimes B_0^{(4-k)}\ \psi^1_s\otimes\psi_{s'}^2=\sum_{s,s'}p_{s,s'}\ \psi^1_s\otimes\psi_{s'}^2
\end{equation}
where for simplicity, we represent the states $\proj{\psi_{s,A_iB_i}}$ as $\psi^i_s$.
It is now straightforward to observe from the above relation that for all $s,s'$
\begin{eqnarray}\label{SOS2}
     A_0^k\otimes \overline{B}_{0,ss'}^{4-k}\ \ket{\psi^1_s}\ket{\psi^2_{s'}}=\ket{\psi^1_s}\ket{\psi^2_{s'}}
\end{eqnarray}
Here $\overline{B}_{0,ss'}$ is the projection of $B_0$ on the support of $\Tr_A\psi^1_s\otimes\Tr_A\psi^2_{s'}$.
The above relations are sufficient to self-test the state $\rho_{AB}$ and Bob's measurement $B_0$. Before proceeding toward the self-testing result, it is important to recall the assumption that the local states are full-rank as the measurements can only be characterized on the local support of the states.
For a note, we closely follow the techniques introduced in \cite{sarkar6}. 

\begin{thm}\label{Theo1M} 
Assume that the steering inequality \eqref{steein1},  with trusted Alice choosing the observable $A_0$ \eqref{A0}, is maximally violated by a separable state $\rho_{AB}$ acting on $\mathbbm{C}^2\otimes\mathbbm{C}^2\otimes\mathcal{H}_B$ and Bob's observable $B_0$. Then, the following statements hold true:
\\
\\
1. Bob's measurement is projective with his Hilbert space decomposing as $\mathcal{H}_{B}=(\mathbbm{C}^2)_{B_1'}\otimes(\mathbbm{C}^2)_{B_2'}\otimes \mathcal{H}_{B_{12}''}$ for some auxiliary Hilbert space $\mathcal{H}_{B''_{12}}=\mathcal{H}_{B''_{1}}\otimes \mathcal{H}_{B''_{2}}$.\\
\\
2.  \ \  There exist unitary transformations, $U_{i}:\mathcal{H}_B\rightarrow\mathcal{H}_B$,  such that
\begin{eqnarray}\label{lem1.2}
(\mathbbm{1}_{A}\otimes U_B)\rho_{AB}(\mathbbm{1}_{A}\otimes U_B^{\dagger})\qquad\qquad\qquad\qquad\nonumber\\=\proj{\phi^+}_{A_1B_1'}\otimes\proj{\phi^+}_{A_2B_2'}\otimes \rho_{B_1''B_2''},
\end{eqnarray}
where $B_i''$ denotes Bob's auxiliary system, and 
\begin{eqnarray}\label{lem1.1}
\quad U_B\,B_0\,U_B^{\dagger}=A_0\otimes \mathbbm{1}_{B_1''B_2''}
\end{eqnarray}
where $U_B=U_1\otimes U_2$.
\end{thm}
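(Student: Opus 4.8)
The plan is to start from the operator identity \eqref{SOS2}, which holds for every eigenvector pair $\ket{\psi^1_s}\ket{\psi^2_{s'}}$ in the eigendecomposition of the separable state $\rho_{AB}$, and to translate it into a self-testing statement first for each fixed block $(s,s')$ and then to glue the blocks together. Fix $(s,s')$ and write $\ket{\psi}=\ket{\psi^1_s}\ket{\psi^2_{s'}}\in(\mathbbm{C}^2)_{A_1}\otimes(\mathbbm{C}^2)_{A_2}\otimes\mathcal{H}_{B_1,ss'}\otimes\mathcal{H}_{B_2,ss'}$, where $\mathcal{H}_{B_i,ss'}$ is the local support of the reduced state $\Tr_{A_i}\psi^i$; by the full-rank assumption $\overline{B}_{0,ss'}$ is the operator we must characterize on this support. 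The four relations $A_0^k\otimes\overline{B}_{0,ss'}^{4-k}\ket{\psi}=\ket{\psi}$ for $k=0,1,2,3$ say that $\overline{B}_{0,ss'}^{4-k}$ acts on $\ket{\psi}$ the same way $A_0^{-k}=A_0^{4-k}$ does. Since $A_0=\sum_{k=1}^4\mathbbm{i}^{k-1}\proj{\phi_k}$ has four distinct eigenvalues $1,\mathbbm{i},-1,-\mathbbm{i}$, taking appropriate linear combinations of the four relations (a finite Fourier transform over $k$) yields, for each Bell-state projector $\Pi_k:=\proj{\phi_k}_{A_1A_2}$, an identity of the form $(\Pi_k\otimes\I_B)\ket{\psi}=(\I_A\otimes R_{k,ss'})\ket{\psi}$ for some operators $R_{k,ss'}$ on Bob's support built from $\overline{B}_{0,ss'}$. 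This is exactly the structure exploited in \cite{sarkar6}.

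Next I would run the standard ``swap'' / localizing argument on each block. Decompose $\ket{\psi}$ in the Bell basis on $A_1A_2$: $\ket{\psi}=\sum_{k=1}^4\ket{\phi_k}_{A_1A_2}\otimes\ket{\xi_k}_{B}$. Feeding this into the relations above forces rigid relations among the $\ket{\xi_k}$ — essentially that they are all unitary images of one another and have equal norm — from which one extracts, on Bob's side, a pair of qubit subsystems $B_1',B_2'$ and a residual factor $B_{1,ss'}''\otimes B_{2,ss'}''$, a unitary $V_{ss'}:\mathcal{H}_{B_{1,ss'}}\otimes\mathcal{H}_{B_{2,ss'}}\to(\mathbbm{C}^2)_{B_1'}\otimes(\mathbbm{C}^2)_{B_2'}\otimes\mathcal{H}_{B_{1,ss'}''}\otimes\mathcal{H}_{B_{2,ss'}''}$, such that $(\I_A\otimes V_{ss'})\ket{\psi}=\ket{\phi^+}_{A_1B_1'}\otimes\ket{\phi^+}_{A_2B_2'}\otimes\ket{\eta_{ss'}}_{B_1''B_2''}$ and $V_{ss'}\,\overline{B}_{0,ss'}\,V_{ss'}^\dagger=A_0\otimes\I_{B''}$. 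In particular $\overline{B}_{0,ss'}$ is unitary (projective measurement on the support), giving statement 1 block-wise. One also needs to check here that the product structure of $\rho_{AB}$ across the $A_1B_1\,|\,A_2B_2$ cut is respected, so that $V_{ss'}=V_{1,ss'}\otimes V_{2,ss'}$ factorizes — this should follow because $\overline{B}_{0,ss'}$ acts as $A_0$, which is itself built multiplicatively from the two qubit factors, combined with the independence structure $\rho_{AB}=\sum_j p_j\rho^j_{A_1B_1}\otimes\rho^j_{A_2B_2}$.

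Finally I would assemble the global unitary. The subtlety is that a priori the local supports $\mathcal{H}_{B_i,ss'}$ and the extracted unitaries $V_{i,ss'}$ depend on the block $(s,s')$, whereas the theorem claims a single decomposition $\mathcal{H}_B=(\mathbbm{C}^2)_{B_1'}\otimes(\mathbbm{C}^2)_{B_2'}\otimes\mathcal{H}_{B_{12}''}$ and a single $U_B=U_1\otimes U_2$ conjugating $B_0$ to $A_0\otimes\I$. The resolution is the full-rank hypothesis on the \emph{global} local states: since $\overline{B}_{0,ss'}$ is the restriction of the \emph{same} operator $B_0$ to the various supports, and since (by full-rankness) these supports can be taken to exhaust $\mathcal{H}_B$, the block data must be compatible, and one can bundle the $V_{i,ss'}$ into a single unitary $U_i$ on $\mathcal{H}_{B_i}$ by absorbing the $(s,s')$-dependence into the junk factor $\mathcal{H}_{B_i''}=\bigoplus_{ss'}\mathcal{H}_{B_{i,ss'}''}$ (direct-sum-into-tensor via an ancilla register), exactly as in \cite{sarkar6, sarkar12}. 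Plugging the block decompositions back into $\rho_{AB}=\sum_{ss'}p_{ss'}\psi^1_s\otimes\psi^2_{s'}$ then gives \eqref{lem1.2} with $\rho_{B_1''B_2''}=\sum_{ss'}p_{ss'}\proj{\eta_{ss'}}$ (suitably embedded), and $U_BB_0U_B^\dagger=A_0\otimes\I_{B_1''B_2''}$ follows block-wise, establishing \eqref{lem1.1}. I expect the main obstacle to be precisely this last gluing step — ensuring the per-block swap isometries can be chosen coherently so that $U_B$ is well-defined on all of $\mathcal{H}_B$ and tensor-factorizes as $U_1\otimes U_2$; the per-block rigidity is essentially the known qubit self-testing argument, but making the dependence on $(s,s')$ disappear cleanly is where the full-rank assumption has to be used carefully.
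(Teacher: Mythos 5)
Your per-block analysis is essentially sound and is a legitimate variant of what the paper does: the Fourier transform over $k$ that turns $A_0^k\otimes\overline{B}_{0,ss'}^{4-k}\ket{\psi}=\ket{\psi}$ into projector relations, followed by the Bell-basis expansion of $\ket{\psi^1_s}\ket{\psi^2_{s'}}$, does force equal Schmidt coefficients (mutual orthogonality of the four Bob-side components kills the free parameters) and pins $\overline{B}_{0,ss'}$ to $A_0$ up to local unitaries. The paper reaches the same block-wise conclusion by a commutator argument instead: it writes $\ket{\tilde\psi^i}=P_i\otimes\I\ket{\phi^+}$ and shows $[A_0,(P_1\otimes P_2)^2]=0$, which forces $P_1=P_2=\I$ because $A_0$ has an entangled eigenbasis while $P_1\otimes P_2$ has a product one. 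A side benefit of that route is that the unitaries are built from the Schmidt bases of $\ket{\psi^1_s}$ and $\ket{\psi^2_{s'}}$ separately, so the tensor factorization $U_B=U_1\otimes U_2$ that you flag as needing care is automatic rather than something to be argued after the fact.

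The genuine gap is in the gluing step. You claim that because the supports exhaust $\mathcal{H}_B$ by full-rankness, ``the block data must be compatible'' and the per-block unitaries can be bundled into a single $U_i$. That does not follow: full-rankness only says the union of the local supports $V_{i,s}$ spans $\mathcal{H}_{B_i}$, whereas defining $U_i=\bigoplus_s U_{s,i}$ and obtaining the decomposition \eqref{block1} requires the supports for different $s$ to be mutually \emph{orthogonal}, which is false for generic separable states (an equal mixture of $\proj{\phi^+}$ and $\proj{\phi^-}$ has identical Bob marginals for its two eigenvectors). The paper derives this orthogonality from the maximal violation itself: the Bob-side components $\ket{g^i_{ss'}}$ of \eqref{g1} satisfy $\omega^i B_0^3\ket{g^i_{ss'}}=\ket{g^i_{ss'}}$, i.e.\ they are eigenvectors of the single global unitary $B_0^3$ with eigenvalue depending only on $i$, hence $\langle g^j_{ll'}\ket{g^i_{ss'}}=0$ for $i\ne j$ across all blocks \eqref{SOS5}; combining these conditions with the orthogonality of distinct eigenvectors of $\rho_{AB}$ then gives $\braket{f_{m,i,l}}{f_{n,i,s}}=0$ for $l\ne s$. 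Note this argument needs the \emph{global} $B_0$, not just its block restrictions, to be unitary, which is why the paper first establishes $B_0B_0^{\dagger}=B_0^{\dagger}B_0=\I_B$ from the full-rank assumption before doing anything block-wise; your proposal only obtains unitarity block by block, which is not enough for the cross-block orthogonality. Without this step your $U_B$ is not well defined and statement 1 of the theorem is not reached.
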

\begin{proof}
Let us first show that Bob's measurement is projective. For this purpose, we consider the relations \eqref{SOS3} for $k=1$ and then multiply it with $A_0^3\otimes B_0$ to obtain
\begin{eqnarray}\label{Bobmea1}
    \I_A\otimes B_0B_0^{(3)}\ \rho_{AB}=A_0^3\otimes B_0\ \rho_{AB}
\end{eqnarray}
where we used the fact that $A_0^4=\I_A$. 
Notice that the right-hand side of the above expression \eqref{Bobmea1} can be simplified using the relation \eqref{SOS3} for $k=3$ to obtain
\begin{eqnarray}
     \I_A\otimes B_0B_0^{(3)}\ \rho_{AB}= \rho_{AB}.
\end{eqnarray}
Thus, taking a partial trace over Alice's subsystem and recalling that $B_0^{(3)}=B_0^{\dagger}$ gives us
\begin{eqnarray}
B_0B_0^{\dagger}\ \rho_B=\rho_B
\end{eqnarray}
where $\rho_B=\Tr_B\  \rho_{AB}$. As the local states are full-rank, they are invertible too and consequently one can arrive at
\begin{eqnarray}
    B_0B_0^{\dagger}=\I_B.
\end{eqnarray}
Similarly, one can also find that $B_0^{\dagger}B_0=\I_B$. Both these relations of Bob's observable suggest that the observable $B_0$ and unitary, and thus Bob's measurement is projective. In a similar manner, considering the relation \eqref{SOS2} one can observe that  $\overline{B}_{0,ss'}$ for all $s,s'$ are unitary.

Let us now consider the relation Eq. \eqref{SOS2} and characterize the states $\ket{\psi^1_s},\ket{\psi^2_{s'}}$ that satisfy the relation \eqref{SOS2}. For simplicity, we drop the indices $s,s'$ for now.
As the local states on Alice's side belong to $\mathbb{C}^2$, using Schmidt decomposition we represent  $\ket{\psi^1},\ket{\psi^2}$ as
\begin{eqnarray}
\ket{\psi^i}=\sum_{j=0,1}\lambda_{j,i}\ket{e_{j,i}}\ket{f_{j,i}}
\end{eqnarray}
where $\lambda_{j,i}\geq0$ and $\{\ket{e_{j,i}}\},\{\ket{f_{j,i}}\}$ form an orthonormal basis for each $i$.
Now applying a unitary $U_{i}$ on these states such that $U_{i}\ket{f_{j,i}}=\ket{e^*_{j,i}}$ gives us
\begin{eqnarray}
\ket{\tilde{\psi}^i}=U_{i}\ket{\psi^i}=\sum_{j=0,1}\lambda_{j,i}\ket{e_{j,i}}\ket{e^*_{j,i}}.
\end{eqnarray}
Now, notice that the state on the right-hand side can be represented as
\begin{eqnarray}\label{state1}
\ket{\tilde{\psi}^i}= P_{i}\otimes\I_{B_i}\ket{\phi^+}
\end{eqnarray}
where
\begin{eqnarray}
    P_i=\sqrt{d}\sum_{j=0,1}\lambda_{j,i}^2\proj{e_{j,i}}.
\end{eqnarray}
Notice that $P_i$ is full-rank as states that are separable between Alice and Bob can not violate the swap-steering inequality \eqref{steein}.
Putting the state \eqref{state1} in the relation \eqref{SOS2} gives us
\begin{eqnarray}\label{st22}
    A_0(P_1\otimes P_2) \otimes \tilde{B}_0^{\dagger}\ket{\phi^+}\ket{\phi^+}=P_1\otimes P_2 \ket{\phi^+}\ket{\phi^+}
\end{eqnarray}
where $\tilde{B}_0=U_{1}^{\dagger}\otimes U_2^{\dagger}\ \overline{B}_0\ U_{1}\otimes U_2$.
Now, using the fact that
\begin{eqnarray}
\ket{\phi^+}_{A_1B_1}\ket{\phi^+}_{A_2B_2}=\ket{\phi^+_4}_{A_1A_2|B_1B_2}
\end{eqnarray}
where $\ket{\phi^+_4}$ is the maximally entangled state of local dimension four. This allows us to conclude from \eqref{st22} that
\begin{eqnarray}
   (P_1^{-1}\otimes P_2^{-1}) A_0(P_1\otimes P_2) \otimes \tilde{B}_0^{\dagger}\ \ket{\phi^+_4}=\ket{\phi^+_4}.
\end{eqnarray}
Now, using the fact that $R\otimes Q\ket{\phi^+}=RQ^T\otimes \I\ket{\phi^+}$, where $T$ denotes the transpose in the computational basis, gives us
\begin{eqnarray}
     (P_1^{-1}\otimes P_2^{-1}) A_0(P_1\otimes P_2) \tilde{B}_0^{*}\otimes\I_B\ket{\phi^+_4}=\ket{\phi^+_4}.
\end{eqnarray}
Taking the partial trace over $B's$ subsystem allows us to conclude that
\begin{eqnarray}
     (P_1^{-1}\otimes P_2^{-1}) A_0(P_1\otimes P_2) \tilde{B}_0^{*}=\I_A
\end{eqnarray}
which eventually leads us to Bob's measurement being
\begin{eqnarray}\label{Bmea1}
    \tilde{B}^T_0= (P_1^{-1}\otimes P_2^{-1}) A_0(P_1\otimes P_2).
\end{eqnarray}
As $\tilde{B}_0$ is unitary and $P_1,P_2$ are Hermitian, we get from the above condition that
\begin{eqnarray}
     (P_1^{-1}\otimes P_2^{-1}) A_0(P_1\otimes P_2)^2A_0^{\dagger}(P_1^{-1}\otimes P_2^{-1})=\I_A.
\end{eqnarray}
Rearranging the terms we obtain that
\begin{eqnarray}
     A_0(P_1\otimes P_2)^2=(P_1\otimes P_2)^2A_0
\end{eqnarray}
which is equivalent to 
\begin{eqnarray}
    [A_0,(P_1\otimes P_2)^2]=0.
\end{eqnarray}
Now, notice that if two matrices commute then they share the same basis. However, the matrix $A_0$ has an entangled basis and the matrix $P_1\otimes P_2$ have a product basis. Thus, the only instance for these two matrices to commute is when $P_1\otimes P_2=\I$ which imposes that $P_1=P_2=\I$. Going back to Eq. \eqref{state1} allows us to conclude that the states $\ket{\psi^1},\ket{\psi^2}$ are the maximally entangled state, that is,
\begin{eqnarray}
   \I_A\otimes U_{i}\ket{\psi^i}= \ket{\phi^+}\quad i=1,2
\end{eqnarray}
and Bob's measurement using \eqref{Bmea1} is
\begin{eqnarray}
  U_{1}^{\dagger}\otimes U_2^{\dagger}\ \overline{B}_0\ U_{1}\otimes U_2=A_0^T=A_0.
\end{eqnarray}

Let us now bring back the indices $s,s'$ and rewrite the states and measurements as
\begin{eqnarray}
\ket{\psi^i_{s}}=\frac{1}{\sqrt{2}}\sum_{j=0,1}\ket{j}\ket{f_{j,i,s}}
\end{eqnarray}
where $U_{s,i}^{\dagger}\ket{j}=\ket{f_{j,i,s}}$ and 
\begin{eqnarray}
     \overline{B}_{0,ss'}=U_{s,1}\otimes U_{s',2}\ A_0\ U_{s,1}^{\dagger}\otimes U_{s',2}^{\dagger}
\end{eqnarray}
for all $s,s'$. From Theorem 1.1 of \cite{sarkar6}, we can express $B_0$ as 
\begin{eqnarray}
 B_0=\overline{B}_{0,ss'}\oplus E_{ss'}
\end{eqnarray}
where $E_{ss'}$ are unitary matrices.

Let us now denote Bob's local support of the states $\ket{\psi^i_s}$ as $V_{i,s}=\ $span$\{\proj{f_{0,i,s}},\proj{f_{1,i,s}}\}$ for all $i,s$. Further on, we will show that the supports $V_{i,l}, V_{i,l'}$ are orthogonal for any $l,l'$. For this purpose, we first express the product of the states $\ket{\psi^1_s}\ket{\psi^2_{s'}}$ as
\begin{eqnarray}
\ket{\psi^1_s}\ket{\psi^2_{s'}}=\frac{1}{2}\sum_{i,j=0,1}\ket{ij}\ket{f_{i,1,s}}\ket{f_{j,2,s'}}
\end{eqnarray}
which can equivalently be expressed using the Bell basis as
\begin{eqnarray}\label{STATE3}
\ket{\psi^1_s}\ket{\psi^2_{s'}}=\frac{1}{2}\sum_{i=1}^4\ket{\phi_i}\ket{g^i_{ss'}}
\end{eqnarray}
where $\ket{\phi_i}$ are given just above Eq. \eqref{A0} and 
\begin{eqnarray}\label{g1}
    \ket{g^1_{ss'}}=\frac{1}{\sqrt{2}}\left(\ket{f_{0,1,s}}\ket{f_{0,2,s'}}+\ket{f_{1,1,s}}\ket{f_{1,2,s'}}\right)\nonumber\\
    \ket{g^2_{ss'}}=\frac{1}{\sqrt{2}}\left(\ket{f_{0,1,s}}\ket{f_{0,2,s'}}-\ket{f_{1,1,s}}\ket{f_{1,2,s'}}\right)\nonumber\\
    \ket{g^3_{ss'}}=\frac{1}{\sqrt{2}}\left(\ket{f_{0,1,s}}\ket{f_{1,2,s'}}+\ket{f_{1,1,s}}\ket{f_{0,2,s'}}\right)\nonumber\\
    \ket{g^4_{ss'}}=\frac{1}{\sqrt{2}}\left(\ket{f_{0,1,s}}\ket{f_{1,2,s'}}-\ket{f_{1,1,s}}\ket{f_{0,2,s'}}\right)
\end{eqnarray}
Let us again utilize the relation \eqref{SOS2} and apply the state \eqref{STATE3} to it to observe that
\begin{eqnarray}
\sum_{i=1}^4\omega^i\ket{\phi_i}B_0^3\ket{g^i_{ss'}}=\sum_{i=1}^4\ket{\phi_i}\ket{g^i_{ss'}}.
\end{eqnarray}
Multiplying with $\bra{\phi_i}$ on both sides of the above expression gives us
\begin{eqnarray}\label{SOS4}
\omega^iB_0^3\ket{g^i_{ss'}}=\ket{g^i_{ss'}}\qquad \forall i.
\end{eqnarray}
As $B_0$ is unitary, we can conclude from the above formula \eqref{SOS4} that
\begin{eqnarray}\label{SOS5}
   \langle g^j_{ll'} \ket{g^i_{ss'}}=0 \qquad i\ne j
\end{eqnarray}
for any $i,j,l,l',s,s'$. Let us now consider Eq. \eqref{SOS5} with $l=s, j=1$ and expand it using \eqref{g1} to obtain the following conditions for $i=2,3,4$ as
\begin{subequations}
\begin{equation}\label{abcd1}
\langle{f_{0,2,l'}}\ket{f_{0,2,s'}}-\langle{f_{1,2,l'}}\ket{f_{1,2,s'}}=0
\end{equation}
\begin{equation}\label{abcd2}
\langle{f_{0,2,l'}}\ket{f_{1,2,s'}}+\langle{f_{1,2,l'}}\ket{f_{0,2,s'}}=0
\end{equation}
\begin{equation}\label{abcd3}
\langle{f_{0,2,l'}}\ket{f_{1,2,s'}}-\langle{f_{1,2,l'}}\ket{f_{0,2,s'}}=0.
\end{equation}
\end{subequations}
From Eqs. \eqref{abcd2} and \eqref{abcd3}, it is straightforward to observe that $\langle{f_{0,2,l'}}\ket{f_{1,2,s'}}=\langle{f_{1,2,l'}}\ket{f_{0,2,s'}}=0$. Let us now recall that $\ket{\psi^2_{l'}}$ and $\ket{\psi^2_{s'}}$ are orthogonal as they correspond to two different eigenvectors of $\rho_{AB}$ which gives us an additional condition
\begin{eqnarray}\label{abcd4}
\langle{f_{0,2,l'}}\ket{f_{0,2,s'}}+\langle{f_{1,2,l'}}\ket{f_{1,2,s'}}=0.
\end{eqnarray}
It is again straightforward to observe from \eqref{abcd1} and \eqref{abcd4} that $\langle{f_{0,2,l'}}\ket{f_{0,2,s'}}=\langle{f_{1,2,l'}}\ket{f_{1,2,s'}}=0$. Thus, the local supports $V_{2,s'}$ and $V_{2,l'}$ are orthogonal for any $s',l'$ such that $s'\ne l'$. Proceeding the same way as above, we can also conclude that the local supports $V_{1,s}$ and $V_{1,l}$ are orthogonal for any $s,l$ such that $s\ne l$. Consequently, the local supports $V_{ss'}=V_{1,s}\otimes V_{2,s'}$ are mutually orthogonal for any $s,s'$.

The local supports $V_{ss'}$ being mutually orthogonal imply that Bob's Hilbert space admits the following decomposition
\begin{equation}\label{block1}
    \mathcal{H}_B= \bigoplus_{s}\bigoplus_{s'}V_{ss'}=\bigoplus_{s}V_{1,s}\otimes\bigoplus_{s'}V_{2,s'}.
\end{equation}
As $\dim V_{1,s}=\dim V_{2,s'}=2$ for any $s,s'$, we can straightforwardly conclude that $\mathcal{H}_B=(\mathbbm{C}^2)_{B'_1}\otimes(\mathbbm{C}^2)_{B'_1}\otimes\mathcal{H}_{B''_{12}}$ where $\mathcal{H}_{B''_{1}}\otimes \mathcal{H}_{B''_{2}}$ for some Hilbert spaces
$\mathcal{H}_{B''_i}$.

The rest of the proof is exactly the same as step 3 in Theorem 1.2 of \cite{sarkar6}, which allows us conclude that there exist unitary transformations, $U_{i}:\mathcal{H}_B\rightarrow\mathcal{H}_B$,  such that
\begin{eqnarray}
(\mathbbm{1}_{A}\otimes U_1\otimes U_2)\rho_{AB}(\mathbbm{1}_{A}\otimes U_1^{\dagger}\otimes U_2^{\dagger})\qquad\qquad\qquad\nonumber\\=\proj{\phi^+}_{A_1B_1'}\otimes\proj{\phi^+}_{A_2B_2'}\otimes \rho_{B_1''B_2''},
\end{eqnarray}
where $\rho_{B_1''B_2''}$ denotes Bob's auxiliary state which is separable with 
\begin{eqnarray}
    U_i=\bigoplus_sU_{s,i} \qquad i=1,2
\end{eqnarray}
and
\begin{eqnarray}
U_1\otimes U_2 \,B_0\,U_1^{\dagger}\otimes U_2^{\dagger}=A_0\otimes \mathbbm{1}_{B_1''B_2''}.
\end{eqnarray}
This completes the proof.
\end{proof}
{\it{Randomness certification.}} An interesting application of the above self-testing statement is that the untrusted Bob's measurement device can be used to generate true randomness which is secure against any adversary. For this purpose, we consider an eavesdropper Eve who has access to Bob's laboratory. Consequently, we consider a state $\rho_{ABE}$ which is shared among Alice, Bob and Eve. As Eve's dimension is unrestricted, we can purify the state as $\ket{\psi_{ABE}}$ such that $\Tr_E\psi_{ABE}=\rho_{AB}$ where $\rho_{AB}$ is separable of the form as discussed below Eq. \eqref{SOS3}. 

Now, to certify whether the measurement outcomes as observed by Bob is truly random, we consider that Eve wants to guess the outcome of Bob's measurement. In order to do so,  she performs a measurement $Z=\{E_e\}$ on her share of the shared states. Here the outcome $e$ is Eve's best guess of Bob's outcome. However, any operation by Eve should not alter the statistics $\vec{p}=\{p(a,b)\}$ observed by Alice and Bob, that is,
\begin{eqnarray}
    p(a,b)=\bra{\psi}M_a\otimes N_b\otimes\I_E\ket{\psi}.
\end{eqnarray}
This is extremely important as the adversary Eve would like to remain invisible to Alice and Bob.

The number of random bits that can be securely generated from Bob's measurement is quantified as $H_{\min}=-\log_2 G(y,\vec{p})$ \cite{di4}, where $G(y,\vec{p})$ is known as the local guessing probability which can be computed as,
\begin{equation}\label{LGpr}
    G(\vec{p})=\sup_{S\in S_{\vec{p}}}\sum_{b}\bra{\psi} \I_A \otimes N_{b} \otimes E_b\ket{\psi},
\end{equation}
where $S_{\vec{p}}$ is the set of all Eve's strategies comprising of the shared states and her measurement that reproduce the probability distribution $\vec{p}$ as expected by Alice and Bob.

Let us now suppose that the swap-steering inequality \eqref{steein} is maximally violated
by $\vec{p}$. As proven above in Theorem \ref{Theo1M} this implies that the state shared by Alice, Bob and Eve up to local unitary operations is, $\ket{\psi_{{ABE}}}=\ket{\phi^+_{A_1B_1'}}\ket{\phi^+_{A_2B_2'}}\ket{\mathrm{\mathrm{aux}}_{\mathrm{B_{12}''E}}}$ as well as $N_b=\proj{\phi_b}\otimes\I_{\mathrm{B''_{12}}}$ where $\ket{\phi_b}$ are given above Eq. \eqref{A0}. Putting these states and measurement in the formula \eqref{LGpr} we obtain 
\begin{eqnarray}\label{guess1}
    G(\vec{p})=\sum_{b}\bra{\phi^+}\bra{\phi^+} (\I_A \otimes \proj{\phi_b})\ket{\phi^+}\ket{\phi^+} \nonumber\\ \bra{\mathrm{aux}}\I_{B_{12''}}\otimes E_b\ket{\mathrm{aux}}.
\end{eqnarray}
Now for all $b$, $\bra{\phi^+}\bra{\phi^+} (\I_A \otimes \proj{\phi_b})\ket{\phi^+}\ket{\phi^+}=1/4$ which allows us to conclude from \eqref{guess1} that
\begin{eqnarray}
     G(\vec{p})=\frac{1}{4}\sum_{b}\bra{\mathrm{aux}}\I_{B_{12''}}\otimes E_b\ket{\mathrm{aux}}=\frac{1}{4}.
\end{eqnarray}
Consequently, $-\log_2 G(\vec{p})=2$ bits of randomness can be certified from Bob's measurement outcomes using our self-testing scheme.

It is important here to note here that the generation of secure randomness is based on the assumption that the sources can only be correlated in a classical way. However, the adversary can always guess the outcomes of Bob if she manages to entangle the sources. For instance, (i) she can prepare both devices beforehand or (ii) she herself could perform an entangled measurement on the systems arriving on Bob's side and then send the outcome to Bob. This problem would persist in any security protocols involving two different constrained sources. However, the second type of attack (ii) can be avoided if Bob randomly chooses not to perform a measurement in some runs of the experiment. Since Eve is unaware of this fact, she would still entangle both sources and can be detected by Alice. 
It will be extremely interesting if Alice and Bob can perform some local operations on their subsystems to figure out whether the received subsystems are generated from separable sources or not.

\section{Conclusions}

In this work, we proposed a self-testing scheme in the 1SDI regime that can be used to certify the states generated by the sources to be maximally entangled and that the measurement performed by the untrusted party is the Bell basis measurement with both the parties performing a single measurement. Unlike most of the known self-testing schemes, we did not assume that any of the underlying states are pure or that the untrusted measurements are projective. This further allowed us to ensure that the randomness generated by the untrusted measurement device is completely random and the outcomes can not be guessed by any intruder.

Several follow-up questions arise from our work. The most interesting among them would be to find a fully DI scheme that can be used to certify randomness without inputs. Another direction would be to generalize the scheme presented in this work to certify an unbounded amount of randomness. This might be possible by either utilizing more than two sources or higher outcome measurements. In the self-testing scheme presented above, one can certify only the maximally entangled state. One might adapt the presented techniques to self-test more than one state in a single run of the experiment.

\begin{acknowledgments}
 This project was funded within the QuantERA II Programme (VERIqTAS project) that has received funding from the European Union’s Horizon 2020 research and innovation programme under Grant Agreement No 101017733.
\end{acknowledgments}

\providecommand{\noopsort}[1]{}\providecommand{\singleletter}[1]{#1}%

\end{document}